\newtheorem{theorem}{Theorem}
\newtheorem{lemma}[theorem]{Lemma}
\newtheorem{definition}[theorem]{Definition}
\newtheorem{remark}[theorem]{Remark}
\newcommand{\RR}{\mathbb{R}}
\newcommand{\ZZ}{\mathbb{Z}}
\newcommand{\NN}{\mathbb{N}}
\renewcommand{\S}{\mathcal{S}}
\newcommand{\E}{\mathcal{E}}
\renewcommand{\P}{\mathcal{P}}
\newcommand{\D}{\mathcal{D}}
\newcommand{\1}{\underline{1}}
\newcommand{\Id}{\mathbb{1}}
\DeclareMathOperator{\tr}{tr}
\algrenewcommand\algorithmiccomment[1]{\hfill {#1}}
\newcommand{\swap}[3]{\left\llbracket {}_{#1}{}^{#2}{}_{#3} \right\rrbracket}
\newcommand{\dswap}[3]{\left\llbracket {}^{#1}{}_{#2}{}^{#3} \right\rrbracket}
\begin{document}

\title{Entanglement-swapping in generalised probabilistic theories, and iterated CHSH games}

\author{Lionel J.\ Dmello}
\email{ldmello@thp.uni-koeln.de} 
\author{Laurens T.\ Ligthart}
\email{ligthart@thp.uni-koeln.de}
\author{David Gross} 
\email{david.gross@thp.uni-koeln.de}
\affiliation{ Institute for Theoretical Physics, University of Cologne, Germany } 
\date{May 22, 2024} 

\begin{abstract} 
    While there exist theories that have states ``more strongly entangled'' than quantum theory, in the sense that they show CHSH values above Tsirelson's bound, all known examples of such theories have a strictly smaller set of measurements.
    Therefore, in tasks which require both bipartite states and measurements, they do not perform better than QM.
	One of the simplest information processing tasks involving both bipartite states and measurements is that of \emph{entanglement swapping}.
	In this paper, we study entanglement swapping in \emph{generalised probabilistic theories} (GPTs).
	In particular, we introduce the \emph{iterated CHSH game}, which measures the power of a GPT to preserve non-classical correlations, in terms of the largest CHSH value obtainable after $n$ rounds of entanglement swapping.
	Our main result is the construction of a GPT that achieves a CHSH value of $4$ after an arbitrary number of rounds.
	This addresses a question about the optimality of quantum theory for such games recently raised by Weilenmann and Colbeck. 
    One challenge faced when treating this problem is that there seems to be no general framework for constructing GPTs in which entanglement swapping is a well-defined operation.
    Therefore, we introduce an algorithmic construction that turns a bipartite GPT into a multipartite GPT that supports entanglement swapping, if consistently possible.
\end{abstract} 

\maketitle

\section{Introduction}\label{sec:intro}

Finding a set of operationally motivated axioms that single out QM has been a long-standing problem in the field of foundations of quantum mechanics.
Such an undertaking requires a mathematical framework that allows us to compare QM with other theories, such as classical theory.
Arguably the most general framework is that of \emph{generalised probabilistic theories} (GPTs) \cite{segal1947postulates,ludwig1967attempt2,ludwig1968attempt3,dahn1968attempt,stolz1969attempt,stolz1971attempt,davies1970operational,barrett2007information, plavala2023general}.

Within this framework, a large number of axioms have been proposed over the years \cite{hardy2001quantum, pawlowski2015information, miklin2021information, brassard2006limit, linden2007quantum, navascues2010glance, fritz2013local}.
One of the main foci of many investigations is the behavior of bipartite theories.
In particular, explaining why CHSH-type experiments \cite{clauser1969proposed, bell1964einstein} in nature are bounded by $2\sqrt2$, which Tsirelson \cite{cirel1980quantum} famously showed to be the largest value allowed by quantum theory.
But, as it stands, we still have no definitive axiom singling out quantum theory based on this property.

In their recent work \cite{weilenmann2020toward}, Weilenmann and Colbeck 
turn to multipartite theories to possibly explain 
$2\sqrt2$.
The argument hinges on a well-known tension between the set of states and measurements in a theory:
Expanding state space to include more correlations requires one to shrink the set of measurements in order to avoid the emergence of negative probabilities.

For example, bipartite boxworld \cite{popescu1994quantum, barrett2007information, plavala2023general} state space is the biggest possible bipartite state space one could make. 
But, as a consequence, the bipartite effect space (the space from which the measurements are constructed) has only product effects. 
This in turn, is the smallest possible bipartite effect space.

Therefore, by adding a step of \emph{entanglement swapping} before playing the CHSH game, the set of possible correlations achievable in theories with such state spaces shrink (see~\cite{short2006entanglement, skrzypczyk2009emergence}). 
Quantum theory seems to strike the optimum between these competing notions, in that
the CHSH violation of $2\sqrt2$ is preserved under entanglement swapping.
And indeed, Weilenmann and Colbeck show that, for the \emph{adaptive CHSH game} \cite{weilenmann2020self}, the winning probability of any theory whose unipartite state and effect spaces are characterised by regular polygons \cite{janotta2011limits} is upper bounded by the winning probability of quantum theory.

Naturally, the question arises whether quantum theory is optimal, in the sense that no post-quantum theory can sustain a CHSH violation greater than $2\sqrt2$ through entanglement swapping.
To investigate this, we introduce the \emph{iterated CHSH game}, which is an extension of the adaptive CHSH game to multiple rounds.
The main result of this paper is to answer this question in the negative, with the construction of a post-quantum GPT which 
sustains the maximal-possible value of $4$ over arbitrarily many rounds in the iterated CHSH game.
Along the way, we also discuss a simpler construction which sustains a CHSH violation of $4$ only for a finite number of rounds.


\subsection{Generalised Probabilistic Theories}

The main idea of GPTs is to model experiments as a two step process. A \emph{preparation} step which produces a \emph{state}, and a \emph{measurement} step which probabilistically maps states to outcomes. 
Associated to each outcome of a measurement, comes an \emph{effect}. 

Mathematically, states are modelled as members of a convex set (convexity corresponding operationally to probabilistic mixtures).
Effects can be viewed as positive linear functionals on states.
The pairing between a state $\rho$ and an effect $e$ is interpreted as the probability of obtaining the outcome corresponding to the effect $e$, given that we prepared the state $\rho$.

It is useful to represent these operations diagrammatically. 
For example, the CHSH experiment consists of a bipartite state measured locally by two parties. 
The resulting contraction can be diagrammatically represented as in Fig.~\ref{fig:CHSH}.
This diagram corresponds to the joint probability obtained by contracting the $2$-tensor representing the state with the tensor product of the local unipartite effects, i.e., the pairing 
\begin{equation}
    \langle \rho_{12}, e_{1} \otimes f_{2} \rangle
\end{equation}
(see also Sec.~\ref{sec:theory_axioms}).

\begin{figure}
    \centering
    \includegraphics[width=0.2\textwidth, height=25mm, keepaspectratio]{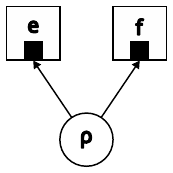}
    \caption{Diagramatic representation of the measurement of a bipartite state by two local observers. 
		Here, the bipartite state $\rho$ is represented by a circle with two arrows emerging from it, one for each sub-system. 
	The unipartite effects $e$ and $f$ on the other hand, are represented by boxes with one port each. The full diagram represents the probability that, on measuring $\rho$, Alice and Bob get the outcome corresponding to effect $e$ and $f$ respectively.}
    \label{fig:CHSH}
\end{figure}


\subsection{Entanglement swapping}

Contractions of the type depicted in Fig.~\ref{fig:CHSH}, whose result is a probability, are called \emph{full} contractions.
It is also possible to define \emph{partial contraction} of states and effects. 

Note for example the situation in Fig.~\ref{fig:partial_contractions}~(a), where a bipartite state $\rho$ is measured only on sub-system $1$ and the outcome corresponding to the effect $e$ is obtained.
We will interpret the resulting object as a (not necessarily normalised) \emph{conditional state}, which acts on effects $f$ on the second sub-system as
\begin{equation}
    \langle \rho_{12}, e_1 \otimes \Id_2 \rangle:
		f_2 \mapsto
    \langle \rho_{12}, e_1 \otimes f_2 \rangle.
\end{equation}
Here $\Id$ is the identity channel, which operationally corresponds to ``do nothing''.

Similarly, one can pair a bipartite effect with a unipartite state.
This situation is depicted in Fig.~\ref{fig:partial_contractions}~(b) and mathematically corresponds to
\begin{equation}
    \langle \rho_1 \otimes \Id_2, e_{12} \rangle
    :
    \sigma_2 \mapsto \langle \rho_1 \otimes \sigma_2, e_{12} \rangle.
\end{equation}

We can also extend these notions to the case of several bipartite objects.
Consider the situation where two bipartite states are partially contracted with a bipartite effect, as depicted in Fig.~\ref{fig:partial_contractions}~(c).
This is the generalisation of the quantum-mechanical notion of entanglement swapping to GPTs. 
These operations will occur so frequently, that we introduce a compact notation:
For bipartite states $\rho, \sigma$ and a bipartite effect $e$, we denote 
the conditional bipartite state resulting from an entanglement swapping procedure as
\begin{equation}
    \swap{\rho}{e}{\sigma}:= \langle \rho_{12} \otimes \sigma_{34}, \Id_1 \otimes e_{23} \otimes \Id_4 \rangle.
\end{equation}
Similarly, we can instead contract two bipartite effects, partially, with a bipartite state, as depicted in Fig.~\ref{fig:partial_contractions}~(d).
This situation we call \emph{dual entanglement swapping}, and as before, for bipartite effects $e, f$ and a bipartite state $\rho$, we denote it by 
\begin{equation}
    \dswap{e}{\rho}{f} := \langle \Id_1 \otimes \rho_{23} \otimes \Id_4, e_{12} \otimes f_{23} \rangle
\end{equation}

In the following, we generally drop subscripts labelling systems in contractions, if the system an object belongs to is clear from context.

\begin{figure}
    \centering
    (a)
    \includegraphics[width=0.2\textwidth, height=20mm, keepaspectratio]{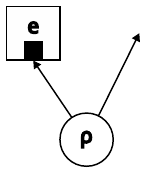}
    \hspace{20mm}
    (b)
    \includegraphics[width=0.2\textwidth, height=20mm, keepaspectratio]{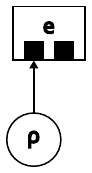}
    \\
    \vspace{5mm}
    (c)
    \includegraphics[width=0.2\textwidth, height=20mm, keepaspectratio]{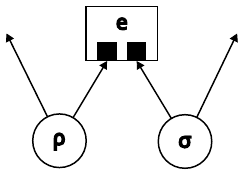}
    \hspace{5mm}
    (d)
    \includegraphics[width=0.2\textwidth, height=20mm, keepaspectratio]{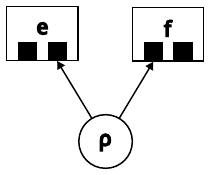}
    \caption{Operational depiction of various partial contractions on the level of a bipartite theory. (a) depicts partial contraction of a unipartite effect and a bipartite state, (b) depicts partial contraction of a bipartite effect and a unipartite state, (c) depicts entanglement swapping, and, (d) depicts dual entanglement swapping}
    \label{fig:partial_contractions}
\end{figure}


\subsection{Outline}

This paper is structured as follows. 
In Section~\ref{sec:theory_axioms} we state our axioms for a theory under which entanglement swapping is well-defined.
We also present an algorithm to either show that a given bipartite theory is inconsistent or extend it to a multipartite theory in a consistent manner.

In Section~\ref{sec:the_question} we discuss the iterated CHSH game.

In Section~\ref{sec:eggs_you_can_yolk} we discuss the notion of composite GPTs, which can swap PR-box correlations for a finite number of steps. 

Finally, in Section~\ref{sec:obl_stab_theory} we present a GPT which sustains a maximal violation of Tsirelson's bound indefinitely under entanglement swapping, and provide an optimal strategy for the iterated CHSH game.


\section{Definition of a Theory}\label{sec:theory_axioms}
The main goal of this paper is to analyse the phenomenon of entanglement swapping in the general context of GPTs.
To this end, it is necessary to specify the conditions on a theory under which entanglement swapping is well defined.

There are a few choices we make therein.
First, we will choose the fundamental objects of our theory to be convex cones rather than starting with the state/effect spaces. 
This is because it is sufficient (and more convenient) to work with cones in the present context, i.e.,
describing notions such as entanglement swapping and partial contractions. 
The state and effects spaces of the theory are obtained as derived objects, in the usual way, as specified below.  

Second, we find it a more natural construction to reinterpret the effects as the primal objects and the states as linear functionals on the effects. 
The two formulations are clearly equivalent. 
For a detailed exposition of such a construction refer to the review \cite{plavala2023general}.

\begin{definition}\label{def:theory}
    A theory in which entanglement swapping is well defined is specified by the following:
    \begin{enumerate}
    \item 
		A finite-dimensional vector space $V$.

	\item 
		An element $\1 \in V$, called the \emph{unit effect}.

	\item A collection of convex cones $\{ \P^{(n)} \}_{n \in \NN}, \ \P^{(n)} \subset V^{\otimes n}$, representing (unnormalised) effects, subject to:
        \begin{enumerate}
            \item $\1 \in \P^{(1)}$.
            \item Closure under tensor products, i.e.,

            If $e \in \P^{(n)}, f \in \P^{(m)}$, then $e \otimes f \in \P^{(n+m)}$.
        \end{enumerate}

	\item A collection of convex cones $\{ \D^{(n)} \}_{n \in \NN}, \ \D^{(n)} \subset (V^{\otimes n})^*$, representing (unnormalised) states, subject to:
        \begin{enumerate}
	    \item Positivity, i.e., $\D^{(n)} \subset (\P^{(n)})'$, the polar dual of $\P^{(n)}$.
            \item Closure under tensor products, i.e., 

            If $\rho \in \D^{(n)}, \sigma \in \D^{(m)}$, then $\rho \otimes \sigma \in \D^{(n+m)}$
        \end{enumerate}

        \item Closure under \emph{partial contractions}, i.e., 

		For $e \in \P^{(n)}, \rho \in \D^{(m)}$:
        \begin{enumerate}
            \item If $n > m$, then $\langle \rho, e \rangle \in \P^{(n-m)}$
            \item If $n < m$, then $\langle \rho, e \rangle \in \D^{(m-n)}$
            \item (for completeness) If $n = m$, then $\langle \rho, e \rangle \in \RR^+$, is the canonical pairing.
        \end{enumerate}

	\item Invariance under permutations of systems, i.e., for every $n \in \NN$ and $\pi \in S_n$, we have $\pi(\D^{(n)}) = \D^{(n)}$ and $\pi(\P^{(n)}) = \P^{(n)}$. Where, for $\rho_{1 \cdots n} \in \D^{(n)}$
    \begin{equation*}
        \pi(\rho_{1 \cdots n}) := \rho_{\pi(1) \cdots \pi(n)}. 
    \end{equation*}
    Similarly for effects.
    \end{enumerate}
\end{definition}

From the above definition, we can derive the following: 

For an effect $e \in \P^{(n)}$, define the \emph{negation}
\begin{equation}
    \neg e := \1^{\otimes n} - e.
\end{equation}

Define the \emph{$n$-partite effect space} as
\begin{equation}
    \E^{(n)} := \P^{(n)} \cap \neg(\P^{(n)}).
\end{equation}

Define the \emph{$n$-partite state space} as 
\begin{equation}
    \S^{(n)} := 
    \{
    \rho \in \D^{(n)}    
    \ | \
    \langle \rho, \1^{\otimes n} \rangle 
    =
    1
    \}.
\end{equation}

Define the set of \emph{unipartite correlators} as
\begin{equation}
    \mathcal{X}^{(1)}
    :=
    \{
    e - \neg e
    \ | \  
    e \in \E^{(1)}
    \}.
\end{equation}

For every choice $A_0, A_1, B_0, B_1 \in \mathcal{X}^{(1)}$, we define the \emph{CHSH observable} as
\begin{equation}
    \begin{aligned}\label{eqn:chsh_obs}
        &\mathrm{CHSH}(A_0, A_1 ; B_0, B_1) 
        \\
        &:= 
        A_0 \otimes B_0 + A_0 \otimes B_1 + A_1 \otimes B_0 - A_1 \otimes B_1.
    \end{aligned}
\end{equation} 
The \emph{CHSH value} corresponding to the above choice of correlators and some choice of $\rho \in \S^{(2)}$ is the expectation value of the CHSH observable with respect to $\rho$, i.e.,
\begin{equation}\label{eqn:chsh_val}
    \langle 
        \rho, 
        \mathrm{CHSH}(A_0, A_1 ; B_0, B_1)
    \rangle.
\end{equation}
We can associate with every theory a CHSH value defined to be the supremum over all possible choices $\rho, A_0, A_1, B_0, B_1$. 
\begin{equation}\label{eqn:theory_chsh}
    \underset{
    \substack{
        A_0, A_1, B_0, B_1 \in \mathcal{X}^{(1)} 
        \\ 
        \rho \in \S^{(2)}
        }
    }
    {\mathrm{sup}}
    |\langle 
        \rho, \mathrm{CHSH}(A_0, A_1 ; B_0, B_1)
    \rangle|.
\end{equation} 

Another important notion in the discussion that follows, is that of \emph{closure under entanglement swapping}.
Given a theory with bipartite cones $\P^{(2)}$ and $\D^{(2)}$, we say $\D^{(2)}$ is \emph{closed} under entanglement swapping if 
\begin{equation}
    \swap{\D^{(2)}}{\P^{(2)}}{\D^{(2)}} \subset \D^{(2)}.
\end{equation} 
Further, we say $\D^{(2)}$ is \emph{stable} under entanglement swapping if 
\begin{equation}
    \mathrm{conv}\Big(\swap{\D^{(2)}}{\P^{(2)}}{\D^{(2)}}\Big) = \D^{(2)}.    
\end{equation}
The same notions apply to dual entanglement swapping.


\subsection{Consistent Bipartite Theory}\label{subsec:consistent_bip_theory}

In this section, we are concerned with the following question. 
Given:
\begin{itemize}
	\item
		A finite-dimensional vector space $V$,
	\item
		a non-zero element $\1\in V$, 
	\item
		a convex cone $P \subset V\otimes V$, 
	\item
		a convex cone $D \subset (V\otimes V)^*$, 
	\end{itemize}
	is it possible to produce a multipartite theory which has the same unit effect, for which $\P^{(2)} = P$ and $\D^{(2)} = D$?

We answer this question by constructing an algorithm that, for an input $n \in \NN$, either produces $n$-partite cones $\P^{(n)}$ and $\D^{(n)}$ by extending $P$ and $D$, if consistently possible, or detects the inconsistency.
The algorithm can be split into two parts. First, a consistency check on the objects $\1, P$ and $D$, followed by an explicit construction of $\P^{(n)}$ and $\D^{(n)}$ for every $n \in \NN$. The algorithm is as follows:

\begin{algorithm}[H]
    \caption{Induced theory}\label{alg:induced_theory}
    \begin{algorithmic}[1]
        \State \textbf{input:} $(V, \1, P, D, n \in \NN)$
        \If{\Call{check consistency}{$V,\1, P, D$} = ``Inconsistent''}
            \State\Return ``Inconsistent''
        \Else 
            \State $\D^{(1)}\leftarrow \mathrm{cone}\langle D, \1 \rangle$
            \Comment{\# partial contractions}
            \State $\P^{(1)}\leftarrow \mathrm{cone}\langle\D^{(1)}, P\rangle$
            \State $\P^{(2)}\leftarrow P$
            \State $\D^{(2)}\leftarrow D$
            \If{$n$ is even}
                \State 
                $\P^{(n)}
                \leftarrow
                S_{n} \,\cdot\,
                \Big(
                    {\P^{(2)}}^{\dot\otimes n/2}
                \Big)$ 
                \State 
                $\D^{(n)}
                \leftarrow
                S_{n} \,\cdot\,
                \Big(
                    {\D^{(2)}}^{\dot\otimes n/2}
                \Big)$
            \Else
                \State 
                $\P^{(n)}
                \leftarrow
                S_{n} \,\cdot\,
                \Big(
                \P^{(1)} \dot\otimes
                {\P^{(2)}}^{\dot\otimes \lfloor n/2 \rfloor}
                \Big)$ 
                \State 
                $\D^{(n)}
                \leftarrow
                S_{n} \,\cdot\,
                \Big(
                \D^{(1)} \dot\otimes
                {\D^{(2)}}^{\dot\otimes \lfloor n/2 \rfloor}
                \Big)$ 
            \EndIf
            \State\Return $\P^{(n)}, \D^{(n)}$
        \EndIf
    \end{algorithmic}
    \vspace{10pt}
    \begin{algorithmic}[1]
        \algblockdefx{Conditions}{EndConditions}{\textbf{if}}{\textbf{then}}
        \Function{check consistency}{$V,\1, P, D$}
            \State $\D^{(1)}\leftarrow \mathrm{cone}\langle D, \1 \rangle$
            \Comment{\# partial contractions}
            \State $\P^{(1)}\leftarrow \mathrm{cone}\langle\D^{(1)}, P\rangle$
            \Conditions
                \State $\1\not \in \P^{(1)}$  
                \Comment{\# Axiom~3(a)}

                \State \textbf{or}

                \State $\langle \D^{(1)}, \P^{(1)}\rangle < 0$ 
                \Comment{\# positivity, unipartite}

                \State \textbf{or}

                \State $\langle D, P\rangle < 0$
                \Comment{\# positivity, bipartite}

                \State \textbf{or}

                \State $\forall \ \pi \in S_2, \ \pi(P) \neq P$
                \Comment{\# Axiom~6, bipartite}

                \State \textbf{or}

                \State $\forall \ \pi \in S_2, \ \pi(D) \neq D$

                \State \textbf{or}

                \State $\P^{(1)} \otimes \P^{(1)}  \not\subset P$
                \Comment{\# Axiom~3(b), unipartite}

                \State \textbf{or}

                \State $\D^{(1)} \otimes \D^{(1)}  \not\subset D$
                \Comment{\# Axiom~4(b), unipartite}

                \State \textbf{or}

                \State $\swap{D}{P}{D}\not\subset D$
                \Comment{\# Axiom~5(b), bipartite}

                \State \textbf{or}

                \State $\dswap{P}{D}{P}\not\subset P$
                \Comment{\# Axiom~5(a), bipartite}
            \EndConditions
            \State \qquad \Return ``inconsistent''
            \State \textbf{else}
            \State \qquad \Return ``consistent''
        \EndFunction
    \end{algorithmic}
\end{algorithm}

In the presentation of Algorithm~\ref{alg:induced_theory}, $\dot\otimes$ stands for the \emph{minimal tensor product}, defined as the conal hull over the tensor product of the respective sets:
\begin{equation}
    \mathcal P^{(n)} \dot\otimes \mathcal P^{(m)}
    :=
    \mathrm{cone}\big(\mathcal P^{(n)} \otimes \mathcal P^{(m)}\big).
\end{equation}
Additionally, we have not specified in which format inputs like $V$ or $P$ are to be supplied, or how the checks should be performed.
In this sense, it is a ``template'' for a concrete algorithm that depends on the mathematical properties of the input data.
For example, if $P, D$ are polyhedral cones in $\RR^d$, then $V$ can be represented by the integer $d$ and the cones by the facet inequalities.
In this case, all tests in Algorithm~\ref{alg:induced_theory} are linear programs.
But more general situations also make sense, e.g.\ cones defined in terms of semi-definite constraints.

\begin{lemma}\label{lem:no_abrt_impls_theory}
    Given the input data $V, \1, P, D$, Algorithm~\ref{alg:induced_theory} either detects inconsistency, or returns consistent $n$-partite cones $\P^{(n)}$ and $\D^{(n)}$ induced by the data, for any $n \in \NN$, in a finite number of steps.
\end{lemma}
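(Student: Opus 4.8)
The plan is to separate the statement into two independent claims: that the algorithm halts in finitely many steps, and that when it does not report ``Inconsistent'' the cones $\P^{(n)}, \D^{(n)}$ it outputs genuinely satisfy every requirement of Definition~\ref{def:theory}. Finiteness is the easy half. The \textsc{check consistency} routine performs a fixed, finite list of cone containments and sign tests on the derived objects $\D^{(1)} = \mathrm{cone}\langle D, \1\rangle$ and $\P^{(1)} = \mathrm{cone}\langle \D^{(1)}, P\rangle$, each of which is decidable in the chosen representation (for polyhedral input these are linear programs, for semidefinite input they are SDPs). The subsequent construction of $\P^{(n)}$ and $\D^{(n)}$ consists of at most $n/2$ minimal tensor products followed by a symmetrisation over the finite group $S_n$, so it terminates. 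Hence the only real content is consistency, which I would establish axiom by axiom.

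Several axioms come essentially for free from the construction and the passed checks. Axiom~3(a) is verified verbatim by the test $\1\notin\P^{(1)}$. Permutation invariance (Axiom~6) holds by construction, since $\P^{(n)}$ and $\D^{(n)}$ are defined as $S_n$-orbits of tensor products; the only base obligation is that $P$ and $D$ themselves be $S_2$-invariant, which is exactly the test $\pi(P)=P$, $\pi(D)=D$. The tensor-product closures (Axioms~3(b) and~4(b)) I would prove by a parity bookkeeping argument: writing $\P^{(n)}$ as the symmetrisation of $P^{\dot\otimes \lfloor n/2\rfloor}$ (with an extra $\P^{(1)}$ factor when $n$ is odd), the product $\P^{(n)}\dot\otimes\P^{(m)}$ carries the right number of $P$-factors to land in $\P^{(n+m)}$ in the even$+$even and even$+$odd cases directly, while in the odd$+$odd case the two leftover $\P^{(1)}$ factors must be merged into a single $P$-factor; this is permitted precisely because the check $\P^{(1)}\otimes\P^{(1)}\subset P$ passed. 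The state side is identical using $\D^{(1)}\otimes\D^{(1)}\subset D$. Symmetrising the result absorbs the residual permutations, since $S_n\times S_m\subset S_{n+m}$.

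The substance of the proof, and the step I expect to be the main obstacle, is positivity (Axiom~4(a)) together with closure under partial contractions (Axiom~5); note that the $n=m$ case of Axiom~5 is exactly positivity. I would treat a pairing $\langle\rho,e\rangle$ of a symmetrised product state against a symmetrised product effect as a tensor network whose nodes are the bipartite ``blobs'' of $\rho$ and $e$ (plus at most one unipartite blob on each side in the odd case) and whose edges are the contracted system indices, and argue by induction on the total number of blobs. After using $\langle\pi(\rho),\sigma(e)\rangle=\langle\rho,\pi^{-1}\sigma(e)\rangle$ to move all permutations onto the effect, I would process one effect blob at a time. A bipartite effect blob whose two legs meet the same bipartite state blob contracts to a nonnegative scalar by $\langle D,P\rangle\ge 0$; one whose legs straddle two distinct state blobs contracts, via an entanglement swap, to a new bipartite state lying in $D$ by the closure $\swap{D}{P}{D}\subset D$; dually, a state leg shared between two effect blobs is absorbed using $\dswap{P}{D}{P}\subset P$. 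Each move strictly decreases the blob count and preserves membership in the induced cones, so the induction closes, with the unipartite blobs of the odd case handled through the derived objects $\D^{(1)},\P^{(1)}$ and their positivity check $\langle\D^{(1)},\P^{(1)}\rangle\ge 0$. The delicate points I would watch most carefully are the bookkeeping that tracks which open legs survive a partial contraction (so that the reduced object has the parity, and hence the cone membership, claimed by Axiom~5) and the verification that a straddling contraction always matches the hypotheses of $\swap{D}{P}{D}\subset D$ or $\dswap{P}{D}{P}\subset P$ rather than some configuration the bipartite checks do not cover.
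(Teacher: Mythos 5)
Your proposal is correct and follows essentially the same route as the paper's own proof: verify each axiom of Definition~\ref{def:theory} directly, with permutation invariance and tensor-product closure holding by construction, and positivity plus closure under partial contractions reduced to the bipartite-level checks (full pairing, entanglement swapping, dual entanglement swapping, and the derived unipartite cones) performed by \textsc{check consistency}. The only difference is one of detail: you make the reduction explicit as an induction on the number of tensor factors and flag the delicate leg-bookkeeping configurations, whereas the paper asserts the same reduction in a single sentence.
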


\begin{proof}
	If the function \emph{check consistency} in Algorithm~\ref{alg:induced_theory} returns ``inconsistent'',  then one of the axioms in Def.~\ref{def:theory} is violated already at the level of bipartite objects and so there is no consistent extension of the input.

	We will now verify that if the function \emph{check consistency} in
	Algorithm~\ref{alg:induced_theory} returns ``consistent'',  
	then there exists a theory as advertised.

	Indeed, the objects whose existence is posited by Axioms~1 and~2 are part of the problem data. 
    Axiom~6 holds by construction given that the bipartite cones are invariant under permutations. 
    Axiom~3(a) is checked directly.
    The $n$-partite cones are constructed only using the unipartite and bipartite cones.
    Therefore, Axioms~5(a) and~5(b) follow from Axiom~6 and closure 
    under partial contractions, entanglement swapping and dual entanglement swapping, at the bipartite level. 
    Axiom~4(a) follows from Axioms~5(a) and~5(b) plus the positivity of the unipartite and bipartite cones, since tensor products preserve positivity. 
    Axioms~3(b) and~4(b) follow by construction.  
\end{proof}


\section{The Iterated CHSH Game}\label{sec:the_question}

The iterated CHSH game, as previously alluded to, is an extension of the adaptive CHSH game from Ref.~\cite{weilenmann2020self}, to include multiple rounds of entanglement swapping.
The game can be viewed as implementing a CHSH test after the use of a ``GPT repeater'', the generalisation of a quantum repeater to GPTs, in order to probe the capacity of the repeater to propagate entanglement.

The iterated CHSH game is parameterised by $n$, referring to the number of repeater units between the start and end nodes.
The players of the game are Alice ($A$), a collection of $n$ Bobs ($\{ B_i \}_{i = 1, \cdots, n}$), and Charlie ($C$).
Alice corresponds to the start node, Charlie to the end node, and the Bobs to repeaters.
The diagramatic reperesentation of the structure thus produced is shown in Fig.~\ref{fig:iterated_CHSH}

\begin{figure}
    \centering
    \includegraphics[width=0.45\textwidth]{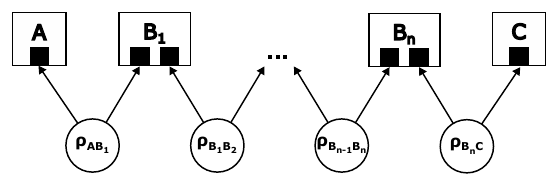}
    \caption{
    Diagrammatic representation of the structure of the iterated CHSH game. 
    $A$ and $C$ stand for Alice and Charlie respectively. 
    The first and the last Bob are shown as $B_1$ and $B_n$ respectively. 
    The other Bobs are similarly iterated in order. 
    Each of the nearest neighbours shares a bipartite resource. 
    The local effect space of all the Bobs is bipartite, and that of Alice and Charlie is unipartite.}
    \label{fig:iterated_CHSH}
\end{figure}

Each round of the game proceeds as follows.
First, each of the Bobs performs a bipartite measurement on the subsystems available to him, effectively leading to $n$ rounds of entanglement swapping. 
They subsequently broadcast their outcomes.
Alice and Charlie are then allowed to implement local corrections based on the outcomes of the Bobs.
Finally, Alice and Charlie perform a CHSH test. 
In addition to shared randomness among all parties, each of the nearest neighbours depicted in Fig.~\ref{fig:iterated_CHSH} may share a bipartite resource, but no other multipartite resource. 
(The adaptive CHSH game of Ref.~\cite{weilenmann2020self}, depicted in Fig.~\ref{fig:adaptive_CHSH}, is the iterated CHSH game for $n = 1$).

\begin{figure}
    \centering
    \includegraphics[width=0.25\textwidth]{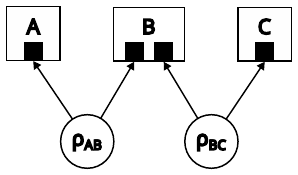}
    \caption{Diagrammatic representation of the structure of the adaptive CHSH game. $A, B$ and $C$ stand for Alice, Bob and Charlie respectively. Bob shares a bipartite resource $\rho_{AB}$ with Alice and $\rho_{BC}$ with Charlie.}
    \label{fig:adaptive_CHSH}
\end{figure}

To calculate the resulting CHSH value between Alice and Charlie in the general case, we simply sum over the CHSH value corresponding to each outcome vector $\vec b = (b_1, \cdots, b_{n})\in (\ZZ_k)^{n}$ (for a $k$-outcome measurement) produced by the Bobs, weighted by the respective probabilities.
That is, if $\beta_{\vec b}$ and $p_{\vec b}$ are the CHSH value and probability corresponding to the outcome vector $\vec b$ respectively, the CHSH value of the game is calculated as 
\begin{equation}\label{eqn:I_CHSH_val}
    \beta = \sum_{\vec b \in (\ZZ_k)^{n}} p_{\vec b} \beta_{\vec b}.
\end{equation}

In the case of quantum mechanics, it is trivial to describe an optimal strategy for the iterated CHSH game:
Choose all the shared states to be the Bell state $|\Phi^+\rangle$.
The Bobs perform a Bell basis measurement each.
Conditioned on the outcomes, the state shared by Alice and Charlie is then one of four elements of the Bell basis, all of which are equiprobable.
A local Pauli operation, performed by either of them, can map it to $|\Phi^+\rangle$.
Hence, it follows that quantum theory retains a CHSH value of $2\sqrt2$ for all $n$.


\section{Composite GPTs}\label{sec:eggs_you_can_yolk}

The present section is inspired by private communications with 
Roger Colbeck,
Marc-Olivier Renou,
Mirjam Weilenmann,
and
Elie Wolfe. 

Here we discuss a family of constructions that can each swap PR-box correlations for a fixed number of iterations $m$.
That is for all $n \leq m$, they have a CHSH value of $4$ in the iterated CHSH game.
But, as we will also show, for all $n > m$ they have a CHSH value of $2$.

The main idea of this construction is to use particles with multiple ``internal degrees of freedom''.
Roughly speaking, we assign to some d.o.f.'s the task of carrying the entanglement, and to others the task of supporting entangled measurements.
As described below, this way, we can circumvent the tension between state and effect space sizes.
We call these particles \emph{composites}, as they can be viewed as a composite of particles from smaller GPTs.

\subsection{An example}

Consider the following example.
Say we have particles with two d.o.f.'s, labeled $1$ and $2$.
Each d.o.f.\ supports the same local effects and local states as a unipartite boxworld theory (\cite{plavala2023general}).
That is, if $\P$ and $\D$ are the unipartite boxworld state and effect cones, then the effects measurable on each composite particle are
\begin{equation}
	\P_{c} 
	=
	\P_{1} \dot\otimes \P_2,
\end{equation}

Physically, this says that we do not allow for entangled measurements between the d.o.f.'s within each composite particle.
We define the joint state space within each particle in the same way:
\begin{equation}
	\D_{c} 
	=
	\D_1 \dot\otimes \D_2,
\end{equation}

We now define the effects and states realisable between two composite particles.
The $n$-partite theory then results from this input data via the general construction of Sec.~\ref{subsec:consistent_bip_theory}.

Let
\begin{equation}
	\widehat{\D}
	=
	\D
	\hat\otimes
	\D
\end{equation}
be the cone of bipartite boxworld states.
Here, $\hat\otimes$ is the \emph{maximal tensor product}.
Mathematically, it is the \emph{polar dual} to the minimal tensor effects:
\begin{equation}
    \mathcal D \hat\otimes \mathcal D := (\mathcal P \dot\otimes \mathcal P)'
\end{equation}
Physically it gives rise to the maximally entangled boxworld states that achieve CHSH values of $4$ \cite{popescu1994quantum, barrett2007information, plavala2023general}.

In our theory, in addition to product states, we allow for maximally entangled boxworld states between the first d.o.f.'s of any two composite particles, and between the first d.o.f.\ of one and the second d.o.f.\ of the other.
We thus arrive at the following cone of composite bipartite states
\begin{equation}
    D_c
    :=
    \mathrm{cone}\big(
        \widehat{\D}_{1_A,1_B}
        \otimes
        \D_{2_A}
        \otimes
        \D_{2_B}
        \cup
        \widehat{\D}_{1_A,2_B}
        \otimes
        \widehat{\D}_{2_A,1_B}
    \big),
\end{equation}
where we have indicated the tensor factors that any object acts on in the superscripts:
Letters $A$, $B$ refer to the first and second composite particle, respectively;
and numbers $1$, $2$ to the internal degrees of freedom.

Finally, the bipartite effects are just the minimal tensor product of the local ones,
except that we allow for maximally entangled measurements between the second d.o.f.'s of any two composite particles (This is consistent because no entanglement exists between these particular d.o.f.'s).
As in the case of states, denote the cone of maximally entangled bipartite effects as
\begin{equation}
	\widehat\P
	=
	\P
	\hat\otimes
	\P.
\end{equation}

Then, for the composite bipartite effects we get:
\begin{equation}
    P_c
    :=
    \mathrm{cone}\big(
        \P_{1_A}
        \otimes
        \P_{1_B}
        \otimes
        \widehat{\P}_{2_A,2_B}
    \big).
\end{equation}

One can now check that $P_c, D_c$ consistently define a theory.
This theory allows for a CHSH value of $4$ after one round of entanglement swapping, according to the scheme visualized in Fig.~\ref{fig:composite_ent_swapping_caric}.

\begin{figure}
    \centering
    \includegraphics[width = 0.45\textwidth]{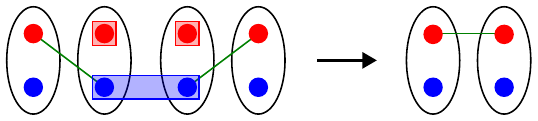}
    \caption{Caricature of PR-box entanglement swapping using composite particles.
    The red dot signifies d.o.f.~$1$, the blue dot d.o.f.~$2$.
    The green lines signify entangled states. 
    The blue rectangle signifies an entangled measurement, the red squares signify product measurements.}
    \label{fig:composite_ent_swapping_caric}
\end{figure}

\subsection{General construction}
	
In the example just treated, by adding an additional d.o.f., we managed to move a PR box past one round of entanglement swapping.
It is resonable to conjecture that with access to more d.o.f.'s one may be able to find a way to sustain entanglement longer, maybe even indefinitely.

To investigate this, we generalise the previous example to $m$ degrees of freedom. 
Each d.o.f. supports unipartite boxworld states and effects.
Now, for each d.o.f. $k$ we have to specify the two disjoint subsets of d.o.f.'s that form entangled states and entangled effects with $k$.
This amounts to specifying two symmetric bipartite graphs as follows:

Start with a collection of $2m$ vertices, two per d.o.f.
\begin{equation}
    \begin{aligned}
        V 
        &=
        \{ v_1, \cdots, v_m, w_1, \cdots, w_m \} 
        \\
        & = 
        \{ v_i \}_{i = 1, \cdots, m} \cup \{ w_i \}_{i = 1, \cdots, m}.
    \end{aligned}
\end{equation}

On this vertex set, define two symmetric bipartite graphs, $G = (V, E)$ and $H = (V, F)$ (with the same bipartition over the above indicated subsets), such that they lie in each other's complement.
The elements of edge sets $E$ and $F$ therefore, are ordered pairs of vertices, one from each of the subsets $\{v_i\}_{i = 1, \cdots, m}$ and $\{ w_i\}_{i = 1, \cdots, m}$.

Indeed, the edges of graph $G$ specifies those pairs of d.o.f.'s that can support entangled states, whereas the edges of graph $H$ give those pairs that can support entangled measurements.

The entire construction can be summarised by the following rules imposed on the edge sets $E$ and $F$:
\begin{enumerate}
    \item If $(v_i, w_j) \in E$ then $(v_j, w_i) \in E$, by symmetry.
    Same for $F$.
    \item If $(v_i, w_j) \in E$ then $(v_i, w_j) \not\in F$, because they lie in each other complements
    \item If $(v_i, w_j), (v_k, w_l) \in E$ and $(v_j, w_k) \in F$ then, we can concatenate edges via entanglement swapping as follows:
    \begin{equation*}
        (v_i, w_j)\circ(v_j, w_k)\circ(v_k, w_l) \equiv (v_i, w_l) \in E.
    \end{equation*}
    \item If $(v_i, w_j), (v_k, w_l) \in F$ and $(v_j, w_k) \in E$ then, we can concatenate edges via dual entanglement swapping as follows:
    \begin{equation*}
        (v_i, w_j)\circ(v_j, w_k)\circ(v_k, w_l) \equiv (v_i, w_l) \in F.
    \end{equation*}
\end{enumerate}

\begin{lemma}
    Given a composite particle with $m$ boxworld d.o.f.'s, there exists $l \in \NN$ such that $\forall \, n \geq l$ the CHSH value of the composite GPT in the iterated CHSH game parametrised by $n$, is $2$.
\end{lemma}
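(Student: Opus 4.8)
The plan is to reduce the analytic statement about CHSH values to a purely combinatorial claim about walks in the pair of graphs $G=(V,E)$ and $H=(V,F)$, and then to rule out arbitrarily long walks by a pigeonhole argument. First I would pin down the only mechanism by which the composite GPT can beat the classical bound: since every entangled state in $D_c$ is a boxworld PR box supported on a single $E$-edge (all remaining states being separable in the minimal tensor product), and every entangled effect in $P_c$ is a PR-effect supported on a single $F$-edge, a CHSH value above $2$ in round $n$ can only arise if some measurement outcome leaves Alice and Charlie sharing a genuine PR box. By concatenation rule~3, such a box is produced precisely when the $n$ swaps realise a chain of $E$-edges glued together by $F$-edges; conversely, if no such chain exists then for every outcome the Alice--Charlie state is separable, so $\beta_{\vec b}\le 2$ for all $\vec b$ and hence $\beta\le 2$, while a classical shared-randomness strategy always attains $2$.

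Next I would encode the routing data as an alternating walk. Viewing the $m$ degrees of freedom as vertices, I put a state arc $i\to j$ whenever $(v_i,w_j)\in E$ and a measurement arc $i\to j$ whenever $(v_i,w_j)\in F$. Tracing the index matching forced by rule~3 through the chain $A\!-\!B_1\!-\!\cdots\!-\!B_n\!-\!C$ shows that a successful round-$n$ protocol corresponds exactly to a walk $c_0\to c_1\to\cdots\to c_{2n+1}$ that alternates state arcs and measurement arcs, beginning and ending with a state arc (the $n+1$ shared states and the $n$ entangled measurements). In this language rule~1 is the symmetry of both relations, rule~2 is their disjointness, and rules~3 and~4 read, writing $R$ for $E$ and $B$ for $F$ as relations on the vertex set, as the closure relations $R\,B\,R\subseteq R$ and $B\,R\,B\subseteq B$.

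The heart of the proof is to show that no such alternating walk can close into a cycle. Suppose a walk revisits a vertex with the same parity, producing a closed alternating sub-walk of even length. Iterating $R\,B\,R\subseteq R$ (resp.\ $B\,R\,B\subseteq B$) collapses its open part to a single $R$-edge (resp.\ $B$-edge) between its two endpoints; but the closing arc carries the opposite colour, and symmetry (rule~1) then places the very same ordered pair in both $E$ and $F$, contradicting rule~2. Hence there are no alternating cycles. Since the relevant state of the walk is a pair (current vertex, parity) drawn from a set of size $2m$, any walk of length at least $2m+1$ must repeat such a state and therefore contain a cycle, which is impossible. Consequently no valid walk of length $2n+1$ exists once $2n+1\ge 2m+1$, i.e.\ for all $n\ge m$, and the CHSH value of the game collapses to $2$. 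Taking $l=m$ proves the lemma.

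I expect the main obstacle to lie in the first reduction rather than in the combinatorics. One must argue carefully that (i) the only post-classical resource able to survive the swaps is a single PR box routed along one alternating chain, so that more elaborate strategies (several internal boxes at once, non-extremal states, or effects entangling more than two degrees of freedom) cannot help; this should follow from convexity together with the explicit generators of $D_c$ and $P_c$. And (ii) the absence of a surviving PR box genuinely forces $\beta=2$ and not merely $\beta\le 2$, which uses the classical lower bound. Once the problem is correctly translated into the alternating-walk picture, the cycle-exclusion and pigeonhole steps are short.
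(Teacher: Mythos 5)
Your proposal is correct and follows essentially the same route as the paper: both reduce the question to the impossibility of a closed alternating $E$/$F$ walk, derive the contradiction from the concatenation rules together with symmetry (Rule~1) and disjointness (Rule~2), and then bound the number of swappable rounds by a counting argument to conclude $l=m$ suffices. Your version merely makes the pigeonhole step and the reduction from CHSH values to the walk picture slightly more explicit than the paper does.
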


\begin{proof}
    In order to sustain entanglement indefinitely under entanglement swapping with a finite number of d.o.f.'s, we must have a ``closed cycle''.
    That is, there is some chain of concatenations, alternating edges from $E$ and $F$ (starting and ending with $E$) that reproduces the first edge.
    In equations:
    \begin{equation}
        \begin{aligned}
            &(v_{1}, w_{2}) \circ (v_{2}, w_{3}) \circ \cdots
            \\
            & \quad
            \cdots \circ (v_{k-1}, w_{k}) \circ (v_{k}, w_{2}) \equiv (v_{1}, w_{2}) \in E,
        \end{aligned}
    \end{equation}
    with $(v_{i}, w_{i+1})$ belongs to $E$ for odd $i$ and $F$ for even $i$.  

    Since the second and penultimate edge belong to $F$, we can use Rule.~4 to simplify the above chain to
    \begin{equation}
        (v_{1}, w_{2}) \circ (v_{2}, w_{k}) \circ (v_{k}, w_{2}) \equiv (v_{1}, w_{2})
    \end{equation}

    The above equation implies that $(v_{1}, w_{2}), (v_{k}, w_{2}) \in E$ and $(v_{2}, w_{k}) \in F$.
    But, by Rule.~1 we get $(v_{k}, w_{2}) \in F$ which contradicts Rule.~2.

    This then means, each time we do an additional round of entanglement swapping, we have to add a new degree of freedom (if not more).

    This implies, given access to $m$ d.o.f.'s, the maximum number of entanglement swapping rounds such that the output is still entangled is less than or equal to $m-1$.

    Therefore, for any $n \geq m$ we end up with only product states in the iterated CHSH game, meaning the CHSH value is no more than $2$.
\end{proof}


\section{Oblate Stabilizer Theory}\label{sec:obl_stab_theory}

In this section we present our main result.
That is, we construct the \emph{oblate stabilizer theory}, which  not only achieves a CHSH value of 4 but is also stable under entanglement swapping. 
In other words, it can sustain this CHSH value indefinitely under entanglement swapping.
We also show that, given the resources described by this theory, there is an optimal strategy by which we get a CHSH value of 4 in the iterated CHSH game.

\subsection{Setup}

The theory can be obtained by slightly deforming the set of quantum-mechanical stabilizer states.
For this reason, we will use objects from the mathematical description of quantum mechanics to construct it.

Consider the one-qubit stabilizer polytope, i.e., the convex hull of the following states on the Bloch sphere 
\begin{equation}
    \begin{gathered}
        | x_\pm \rangle\langle x_\pm | := \tfrac12(\Id \pm \sigma_1), 
        \\
        | y_\pm \rangle\langle y_\pm | := \tfrac12(\Id \pm \sigma_2),
        \\
        | z_\pm \rangle\langle z_\pm | := \tfrac12(\Id \pm \sigma_3),
    \end{gathered}
\end{equation}
where $\sigma_1, \sigma_2, \sigma_3$ are the Pauli matrices.
Now perform a ``uniform stretch in the equatorial plane of the Bloch sphere'', i.e., for some $r > 1$, set 
\begin{equation}
    \begin{gathered}
        | \tilde x_\pm \rangle\langle \tilde x_\pm | := \tfrac12(\Id \pm r\sigma_1), 
        \\
        | \tilde y_\pm \rangle\langle \tilde y_\pm | := \tfrac12(\Id \pm r\sigma_2),
        \\
        | \tilde z_\pm \rangle\langle \tilde z_\pm |
        :=
        | z_\pm \rangle\langle z_\pm | = \tfrac12(\Id \pm \sigma_3).
    \end{gathered}
\end{equation}
These will be the building blocks for the unipartite state space.
The \emph{Bell state} is defined as usual as:
\begin{equation}
    | \Phi^+ \rangle\langle \Phi^+ |
    :=
    \tfrac14 
    (\sigma_0 \otimes \sigma_0
    +\sigma_1 \otimes \sigma_1
    -\sigma_2 \otimes \sigma_2
    +\sigma_3 \otimes \sigma_3).
\end{equation}
It satisfies the standard identities 
\begin{equation}\label{eqn:id_Bell_shift}
    (A \otimes B)| \Phi^+ \rangle = (\Id \otimes BA^t)|\Phi^+\rangle
\end{equation}
and
\begin{equation}\label{eqn:id_Bell_part_tr}
    \tr_1(| \Phi^+ \rangle\langle \Phi^+ |) 
    = \tr_2(| \Phi^+ \rangle\langle \Phi^+ |)
    = \tfrac12 \Id.
\end{equation}
Where $A^t$ stands for the matrix-transpose of $A$, and 
\begin{equation}
    | \Phi^+ \rangle = \tfrac1{\sqrt2}(|00\rangle + |11\rangle).
\end{equation}
From the Bell state, we can generate the \emph{Bell basis} as:
\begin{equation}\label{eqn:Bell_basis}
    | \Phi^+_\mu \rangle\langle \Phi^+_\mu |
    :=
    (\sigma_\mu \otimes \Id)| \Phi^+ \rangle\langle \Phi^+ |(\sigma_\mu \otimes \Id),
    \quad
    \mu \in \ZZ_4.
\end{equation}

In keeping with the quantum formalism, effects will also be represented by $2 \times 2$ matrices
and the pairing between states and effects by the trace inner product. 
Additionally, partial contractions between states and effects are realised as partial traces. 
For example, for a unipartite effect $e_1$ and bipartite state $\rho_{12}$, the partial contraction is
\begin{equation}
    \langle \rho_{12}, e_1 \rangle := \tr_1\big(\rho_{12}(e_{1} \otimes \Id)\big).
\end{equation}
Similarly, for a bipartite effect $e$ and bipartite states $\rho$ and $\sigma$, the entanglement swapping map is
\begin{equation}
    \swap{\rho}{e}{\sigma}
    :=
    \tr_{23}
    \big( 
    \rho_{12} \otimes \Id^{\otimes 2} 
    (\Id \otimes e_{23} \otimes \Id) 
    \Id^{\otimes 2} \otimes \sigma_{34} 
    \big).
\end{equation} 

\subsection{Unipartite Theory}\label{subsec:unip_obl_stab_theory}

Let $R = e^{-i \frac\pi8 \sigma_3}$ be the unitary which, by conjugation ($R(\,\cdot\,)R^\dag$), implements a $\pi/4$-rotation of the Bloch sphere about the $z$-axis.

Let $\Omega$ be the stretched stabilizer states, for some choice of $r > 1$ :
\begin{equation}
    \Omega := 
    \{ 
        | \tilde x_\pm \rangle\langle \tilde x_\pm |,
        | \tilde y_\pm \rangle\langle \tilde y_\pm |,
        | \tilde z_\pm \rangle\langle \tilde z_\pm |
    \}
\end{equation}

\begin{definition}[Unipartite Oblate Stabilizer Theory]\label{def:loc_obl_stab_th}
    Choose $r = \big(\cos(\tfrac\pi4)\big)^{-\frac12} = \sqrt[4]2$, and define the following:
    \begin{enumerate}
        \item The unit effect, $\1 := \Id$, the $2 \times 2$ identity matrix.
        \item The effect cone, $\P^{(1)} := \mathrm{cone}\big(R \Omega R^\dag \big)$.
        \item The state cone, $\D^{(1)} := \mathrm{cone}\big( \Omega \big)$.
    \end{enumerate}
\end{definition}

\begin{figure}
    \centering
    \includegraphics[width=0.45\textwidth]{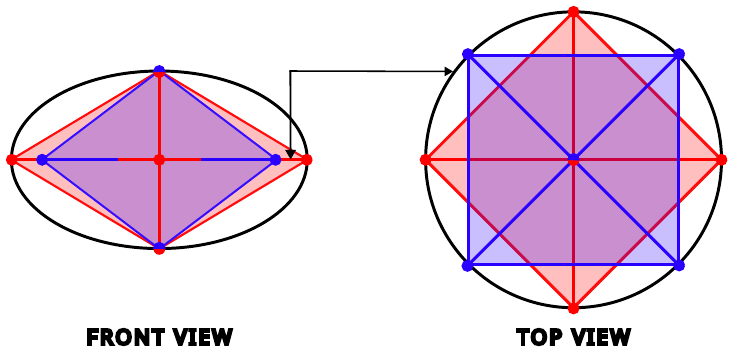}
    \caption{Caricature of the geometry of the unipartite Oblate Stabilizer Theory. The black outlines represent a Bloch sphere with a scaled up equatorial plane. The rays formed by vertices of the red polytope (what used to be the stabilizer states) represent the extremal rays of the state cone. The rays formed by vertices of the rotated (about the $z$-axis, by $\pi/4$) blue polytope represent the extremal rays of the effect cone.}
    \label{fig:obl_stab_th}
\end{figure}

Indeed, we get the following derived objects: The negation of an effect is
\begin{equation}
    \neg e  = \Id - e.
\end{equation}
The state space is simply 
\begin{equation}
    \S^{(1)} = \mathrm{conv}(\Omega).
\end{equation}
Similarly, the effect space is 
\begin{equation}
    \E^{(1)} = \mathrm{conv}(\{ 0, \Id \} \cup R \Omega R^\dag).
\end{equation}

It is easily verified that we have a well-defined unipartite theory as:

\noindent 1. The effect space is closed under negations
\begin{equation}\label{eqn:closed_und_negs}
    \neg \E^{(1)} = \E^{(1)}.
\end{equation}
2. The states are normalised
\begin{equation}
    \rho \in \S^{(1)} \quad \Rightarrow \quad \tr(\rho \Id) = 1.
\end{equation}
3. Pairing a state and an effect gives a probability, i.e.,
\begin{equation}\label{eqn:uni_pairing_givs_probs}
    \rho \in \S^{(1)}, \ e \in \E^{(1)} \quad \Rightarrow \quad \tr(\rho e) \in [0,1].
\end{equation}
It suffices to check~(\ref{eqn:uni_pairing_givs_probs}) for the extremal vertices. 
Additionally, due to~(\ref{eqn:closed_und_negs}) only the upper bound needs to be checked. 
From Fig.~\ref{fig:obl_stab_th}, the inner product on the $z$-axis is unchanged from quantum theory and hence bounded between $0$ and $1$. 
On the equatorial plane the largest inner product is between two nearest neighbours. 
By construction, the angle enclosed by the corresponding Bloch vectors is $\pi/4$, meaning
\begin{equation}
    \underset{\rho, e}{\mathrm{sup}} \tr(\rho e) = \tfrac12\big( 1 + r^2 \cos(\pi/4) \big) = 1.
\end{equation}

\subsection{Bipartite Theory}\label{subsec:bip_OST}

To construct the bipartite theory we first interpret the bipartite states as maps from the unipartite effect cone to the unipartite state cone, via partial contraction. 
Similarly for bipartite effects (refer Fig.~\ref{fig:partial_contractions}~(a)~(b)). Then, we use the following properties of the unipartite theory:
\begin{enumerate}
    \item Both the state and effect cones are invariant under conjugation by Pauli matrices. 
    In the Bloch picture, these correspond to reflections about the $x$, $y$, and $z$-axes.
    \item Both state and effect cones are invariant under matrix transpose. 
    In the Bloch picture, this corresponds to  reflections about the $xz$-plane.
    \item Both the state and effect cones are invariant under conjugation by $R^m$ for $m \in \{ 0,2,4,6 \}$.
    In the Bloch picture this corresponds to a $m\pi/4$-rotation about the $z$-axis.
    \item On the other hand, conjugation by $R^m$ for $m \in \{1,3,5,7\}$ maps the state cone to the effect cone and vice-versa.
\end{enumerate}

\noindent Therefore, for $\mu \in \ZZ_4$, $m \in \ZZ_8$ define the projections
\begin{equation}
    | \Phi^+_{\mu,m} \rangle \langle \Phi^+_{\mu,m} |
    :=
    (\sigma_\mu R^m)^\dag \otimes \Id 
    | \Phi^+ \rangle \langle \Phi^+ |
    (\sigma_\mu R^m) \otimes \Id.
\end{equation}
Further, define the set 
\begin{equation}
    \Phi := 
    \big\{ 
    | \Phi^+_{\mu,m} \rangle \langle \Phi^+_{\mu,m} | 
    \ \big| \ 
    \mu \in \ZZ_4, \ m \in \ZZ_8, \ m \ \text{odd}
    \big\}.
\end{equation}

\begin{definition}[Oblate Stabilizer Theory]\label{def:bip_obl_stab_th}
    In the sense of Sec.~\ref{subsec:consistent_bip_theory}, \emph{oblate stabilizer theory} is the theory specified by the following data:
    \begin{enumerate}
        \item $V$, the set of $2\times 2$ Hermitian matrices,
        \item $\1=\Id$, the $2 \times 2$ identity matrix,
        \item $P = \mathrm{cone}\big( R\Omega R^\dag \otimes R\Omega R^\dag \cup \Phi \big)$,
        \item $D = \mathrm{cone}\big( \Omega \otimes \Omega \cup \Phi \big)$.
    \end{enumerate}
\end{definition}

\begin{lemma}
    The data specified in Def.~\ref{def:bip_obl_stab_th}, can be consistently extended to a theory using Alg.~\ref{alg:induced_theory}.
\end{lemma}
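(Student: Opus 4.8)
The plan is to invoke Lemma~\ref{lem:no_abrt_impls_theory}: since that lemma guarantees a consistent extension exists whenever the \emph{check consistency} routine of Algorithm~\ref{alg:induced_theory} returns ``consistent'', it suffices to verify that none of the inconsistency conditions in that routine are triggered by the data $(V,\1,P,D)$ of Def.~\ref{def:bip_obl_stab_th}. The first preparatory step is to identify the derived unipartite cones. Partially contracting the generators of $D$ with $\1=\Id$ sends each product state $\omega\otimes\omega'\in\Omega\otimes\Omega$ to a positive multiple of a state in $\Omega$, and sends each element of $\Phi$ to $\tfrac12\Id$ by~(\ref{eqn:id_Bell_part_tr}); hence $\D^{(1)}=\mathrm{cone}(\Omega)$, and contracting $\D^{(1)}$ against the generators of $P$ gives $\P^{(1)}=\mathrm{cone}(R\Omega R^\dag)$, both matching Def.~\ref{def:loc_obl_stab_th}. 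With these identifications the ``easy'' conditions are immediate: $\1\in\P^{(1)}$ because $R$ fixes $\sigma_3$, so $\Id=R|\tilde z_+\rangle\langle\tilde z_+|R^\dag+R|\tilde z_-\rangle\langle\tilde z_-|R^\dag$; the tensor-closure tests $\P^{(1)}\otimes\P^{(1)}\subset P$ and $\D^{(1)}\otimes\D^{(1)}\subset D$ hold by construction; the unipartite positivity test is exactly~(\ref{eqn:uni_pairing_givs_probs}) from Sec.~\ref{subsec:unip_obl_stab_theory}; and permutation invariance reduces to checking that $\Phi$ is swap-closed, which follows from~(\ref{eqn:id_Bell_shift}) together with $\mathrm{SWAP}\,|\Phi^+\rangle=|\Phi^+\rangle$, $\overline R=R^{-1}$ and the fact that moving a Pauli through $R$ only inverts its exponent (so the odd power $m$ is preserved in parity).

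The substantive part is bipartite positivity $\langle D,P\rangle\geq 0$ and, above all, the swapping-closure conditions $\swap{D}{P}{D}\subset D$ and $\dswap{P}{D}{P}\subset P$. The crucial difficulty throughout is that, because $r>1$, the stretched states in $\Omega$ are \emph{not} positive semidefinite, so nothing can be read off from quantum-mechanical positivity; everything must instead be funnelled through the single structural fact, recorded in Sec.~\ref{subsec:bip_OST}, that conjugation by an odd power of $R$ interchanges $\D^{(1)}$ and $\P^{(1)}$, while Pauli conjugation and transposition preserve each. For positivity I would reduce to pairs of generators and use the Bell identity $\langle\Phi^+|(X\otimes Y)|\Phi^+\rangle=\tfrac12\tr(XY^t)$: a state/effect cross-term such as $\tr\big((\omega\otimes\omega')|\Phi^+_{\mu,m}\rangle\langle\Phi^+_{\mu,m}|\big)$ becomes $\tfrac12\tr\big((\sigma_\mu R^m\,\omega\,(\sigma_\mu R^m)^\dag)\,\omega'^{\,t}\big)$, whose first factor lands in $\P^{(1)}$ (odd power of $R$) and whose second lands in $\D^{(1)}$, so the pairing is nonnegative by the unipartite theory; the remaining pairings are either products of unipartite pairings or squared Bell overlaps.

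For the swapping conditions I would reduce $\swap{D}{P}{D}$ to its generators, giving $2^3=8$ cases according to whether each of $\rho,e,\sigma$ is a product or a Bell object. The heart of the matter is the all-Bell case: a direct computation in the computational basis, using~(\ref{eqn:id_Bell_shift}), yields
\begin{equation*}
    \swap{\rho}{e}{\sigma}
    = \tfrac14\,(\Id\otimes \overline{C} B\,\overline{A})\,|\Phi^+\rangle\langle\Phi^+|\,(\Id\otimes \overline{C} B\,\overline{A})^\dag ,
\end{equation*}
where $A,B,C$ are the unitaries $\sigma_\mu R^m$ defining $\rho,e,\sigma$. Since each carries an odd power of $R$ and moving Paulis through $R$ only flips the sign of the exponent, the total exponent of $R$ in $\overline{C}B\overline{A}$ has the parity of a sum of three odd numbers, hence is odd; therefore the output is again a Bell projector in $\Phi\subset D$. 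In every mixed case the contraction factorises, and each resulting single-system factor is obtained from a generator of $\D^{(1)}$ or $\P^{(1)}$ by conjugation with an odd power of $R$ (possibly composed with Paulis and a transpose), landing it in $\D^{(1)}$, so the output is a product state in $\mathrm{cone}(\Omega\otimes\Omega)\subset D$.

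Finally, the dual condition $\dswap{P}{D}{P}\subset P$ follows from the identical computation with the roles of states and effects exchanged. This exchange is legitimate precisely because $\Phi$ is the common entangled set of both $P$ and $D$, and because conjugation by an odd power of $R$ maps $\P^{(1)}$ to $\D^{(1)}$ and back, so every step of the all-Bell and mixed-case analysis transports verbatim. I expect the main obstacle to be bookkeeping rather than conceptual: one must rewrite each partial contraction so that \emph{exactly one} odd power of $R$ acts at the decisive moment, converting an effect-cone object into a state-cone object (or vice versa), and track the Pauli labels and transposes that accompany it through~(\ref{eqn:id_Bell_shift}).
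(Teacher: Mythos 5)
Your proposal is correct and follows essentially the same route as the paper's proof: invoke Lemma~\ref{lem:no_abrt_impls_theory}, identify the derived unipartite cones with those of Def.~\ref{def:loc_obl_stab_th}, reduce bipartite positivity to unipartite pairings via the Bell identity, split entanglement swapping into cases according to which objects are entangled, and handle the all-entangled case by tracking the parity of the total power of $R$ (so that $\swap{\Phi}{\Phi}{\Phi}\subset\tfrac14\Phi$), with the dual conditions following by symmetry. Your explicit remark that the stretched states are not positive semidefinite, so positivity cannot be borrowed from quantum mechanics and must pass through the odd-$R$-power cone exchange, is a point the paper leaves implicit but is the right thing to emphasise.
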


\begin{proof}
	We verify that Algorithm~\ref{alg:induced_theory} accepts the data, which is thus consistent by Lemma~\ref{lem:no_abrt_impls_theory}.

    The partial trace of all entangled states introduced is $\tfrac12 \Id$.
    Partial contractions of entangled effects with unipartite states can be viewed as conjugation by a odd rotation, and then by a Pauli matrix. 
    This, as discussed before, maps the unipartite state cone to the unipartite effect cone.
    Therefore, by construction, the $\D^{(1)}$ and $\P^{(1)}$ assigned by the algorithm are the same $\D^{(1)}$ and $\P^{(1)}$ as in Def.~\ref{def:loc_obl_stab_th}.
    Hence, we already have $\1 \in \P^{(1)}$ and $\langle \D^{(1)}, \P^{(1)}\rangle \geq 0$.

    Since the entangled states and effects are projectors taken from quantum theory without modification, the trace inner product between them is non-negative.

	We have verified above that pairing unipartite states and effects leads to positive outcomes, and this property is preserved under tensor products.

    Using identities~(\ref{eqn:id_Bell_shift})~and~(\ref{eqn:id_Bell_part_tr}), it can be easily shown that 
    \begin{equation}
        \tr\big(| \Phi^+_{\mu,m} \rangle \langle \Phi^+_{\mu,m} | e \otimes f \big)
        =
        \tfrac12 \tr\big(f\sigma_\mu R^m e^t (R^{m})^\dag \sigma_\mu \big),
    \end{equation}
    which is just $\tfrac12$ times the pairing between a unipartite state and effect, and hence, is positive. 
    The same argument extends to entangled effects and product states.
    Therefore, we have $\langle D,P \rangle \geq 0$.

    For invariance under permutation of systems, we need only check $S_2$ invariance for the set of entangled states and effects, since everything else is permutation invariant by construction.
    This is readily verified by using identity~(\ref{eqn:id_Bell_shift}) since $|\Phi^+ \rangle \langle \Phi^+ |$ is already $S_2$ invariant.

    The tensor products of the unipartite cones, $\P^{(1)} \otimes \P^{(1)}$  and $\D^{(1)} \otimes \D^{(1)}$ 
	are subsets of, respectively, $P$ and $D$ by construction.

	It remains to be shown that the theory is closed under entanglement swapping and dual entanglement swapping.
	We only treat the first case explicitly. 
	The dual version follows in complete analogy.
    We separate the entanglement swapping contractions into four types.
	\begin{enumerate}
		\item
		The effect factorizes, i.e., $\swap{\bullet}{e \otimes f}{\bullet}$.
        In this case the result is an element of $\D^{(1)}\otimes\D^{(1)}$, because the contraction splits as follows:
        \begin{equation*}
            \includegraphics[width=0.325\linewidth]{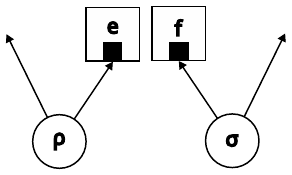}
        \end{equation*}

	    \item
		Both states are product states, i.e., 
        $\swap{\rho_1 \otimes \rho_2}{\bullet}{\rho_3 \otimes \rho_4 }$.
        In this case the result is again an element of $\D^{(1)}\otimes\D^{(1)}$, because the central objects can be grouped as:
        \begin{equation*}
            \includegraphics[width=0.325\linewidth]{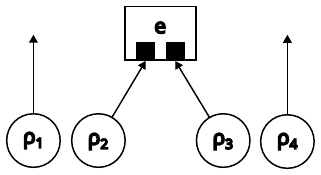}
        \end{equation*}

		\item
		There is one entangled state and one entangled effect, i.e.,
		$\swap{\Phi}{\Phi}{\rho \otimes \sigma}$ for example.
		In this case the result is once again an element of $\D^{(1)}\otimes\D^{(1)}$, because we can split up the contraction as:
        \begin{equation*}
            \includegraphics[width=0.325\linewidth]{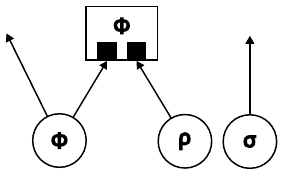}
        \end{equation*}

		\item
		All three objects are entangled. 
		This leads to a simple but lengthy calculation, which we have deferred to Appendix~\ref{subsec:verify_ent_swap}.
		The result is
		\begin{equation}
			\swap{\Phi}{\Phi}{\Phi} \subset \tfrac14 \Phi.
		\end{equation}
	\end{enumerate}

    Therefore, none of the conditions required to detect inconsistency in Algorithm~\ref{alg:induced_theory} are met.
    And hence, it does not return ``inconsistent''.
\end{proof}

\subsection{Iterated CHSH game}

The theory exhibits a CHSH value of $4$ for the observable~(\ref{eqn:chsh_obs}) and the choices
\begin{equation}\label{eqn:obl_thry_chsh_4}
    \begin{aligned}
        \rho 
        &= 
        | \Phi^+_{0,1} \rangle \langle \Phi^+_{0,1} |,
        \\
        A_0 
        & = 
        R | \tilde x_+ \rangle\langle \tilde x_+ | R^\dag
        -
        R | \tilde x_- \rangle\langle \tilde x_- | R^\dag
        =
        r R \sigma_1 R^\dag,
        \\
        A_1 
        & = 
        R | \tilde y_+ \rangle\langle \tilde y_+ | R^\dag
        -
        R | \tilde y_- \rangle\langle \tilde y_- | R^\dag
        =
        r R \sigma_2 R^\dag,
        \\
        B_0 
        & = 
        R | \tilde x_+ \rangle\langle \tilde x_+ | R^\dag
        -
        R | \tilde x_- \rangle\langle \tilde x_- | R^\dag
        =
        r R \sigma_1 R^\dag,
        \\
        B_1 
        & = 
        R | \tilde y_+ \rangle\langle \tilde y_+ | R^\dag
        -
        R | \tilde y_- \rangle\langle \tilde y_- | R^\dag
        =
        r R \sigma_2 R^\dag.
    \end{aligned}
\end{equation}
Since this is the maximum possible value, the CHSH value associated with this theory, as defined in Eqn.~(\ref{eqn:theory_chsh}), is also 4.

Also, $\forall \ \rho \in \D^{(2)}$ we have (refer to Appendix~\ref{subsec:verify_ost_stability})
    \begin{equation}
        \swap
        {
            | \Phi^+_{0,1} \rangle \langle \Phi^+_{0,1} |
        }
        {
            | \Phi^+_{0,1} \rangle \langle \Phi^+_{0,1} |
        }
        {
            \rho
        }
        \propto
        \rho
    \end{equation} 
Therefore, we not only have closure but also stability under both entanglement swapping and dual entanglement swapping. 

\begin{remark}
    It is worth pointing out here that the choice of correlators being only on the $xy$-plane is no coincidence. 
    For any situation with a $z$-measurement, oblate stabilizer theory no longer has such a strong CHSH violation.
\end{remark}

The general strategy for the iterated CHSH game for Oblate Stabilizer Theory is the same as the optimal strategy for quantum theory discussed in Sec.~\ref{sec:the_question}.

Alice and Charlie have access to two-setting two-outcome measurement machines.
The Bobs have access to four-outcome bipartite measurement machines.
In each run of the experiment, nearest neighbours share a bipartite Oblate Stabilizer State.
Each Bob performs a bipartite measurement on the sub-systems available to him and broadcasts his outcome.
Based on this, either Alice or Charlie apply a correction locally and perform a CHSH test. 

\begin{theorem}\label{thm:OST_CHSH_4}
	Oblate stabilizer theory reaches a value of $\beta=4$ in the iterated CHSH game. 
\end{theorem}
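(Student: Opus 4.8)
The plan is to exhibit an explicit strategy and show that it yields the maximal CHSH value $\beta = 4$ in every round, by reducing the multi-round game to the single-round behaviour already established for oblate stabilizer theory. The key structural fact to exploit is the stability identity stated just above the theorem: for every $\rho \in \D^{(2)}$,
\begin{equation}\label{eqn:stab_recall}
    \swap
    {| \Phi^+_{0,1} \rangle \langle \Phi^+_{0,1} |}
    {| \Phi^+_{0,1} \rangle \langle \Phi^+_{0,1} |}
    {\rho}
    \propto \rho.
\end{equation}
This says that the distinguished entangled state $| \Phi^+_{0,1} \rangle \langle \Phi^+_{0,1} |$ acts as a \emph{fixed point} of entanglement swapping (up to normalisation) when the connecting measurement outcome is the corresponding entangled effect. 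Combined with the observation that this state already realises a CHSH value of $4$ via the correlators in~(\ref{eqn:obl_thry_chsh_4}), this is the engine of the whole argument.

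First I would fix the strategy: each of the $n$ nearest-neighbour pairs shares the state $| \Phi^+_{0,1} \rangle \langle \Phi^+_{0,1} |$, each Bob performs the four-outcome bipartite measurement whose effects are the (suitably normalised) elements of the Bell-type family $\Phi$, and Alice and Charlie apply the local Pauli-type corrections dictated by the broadcast outcomes. The second step is to analyse a single Bob's action: I would show that swapping two copies of $| \Phi^+_{0,1} \rangle \langle \Phi^+_{0,1} |$ through any outcome effect $|\Phi^+_{\mu,m}\rangle\langle\Phi^+_{\mu,m}| \in \Phi$ produces, up to normalisation, a state in the same family, i.e.\ $| \Phi^+_{\mu',m'} \rangle \langle \Phi^+_{\mu',m'} |$ for some shifted indices. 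This follows from the Bell identities~(\ref{eqn:id_Bell_shift}) and~(\ref{eqn:id_Bell_part_tr}): conjugating by $\sigma_\mu R^m$ on one leg translates, under~(\ref{eqn:id_Bell_shift}), into a transposed operator on the other leg, so the composition of two swaps is again a single Bell-family state with group-composed labels. The distinguished outcome returns~(\ref{eqn:stab_recall}) exactly.

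The third step is the inductive chaining. Proceeding left to right through the $n$ Bobs, I would argue that after all swaps the (unnormalised) conditional state shared by Alice and Charlie is again an element of the family $\Phi$, with labels determined by the product of the individual outcome labels; crucially, \emph{every} outcome combination $\vec b$ yields a state in $\Phi$, not merely the all-distinguished one. Since each such state is local-Pauli-equivalent to $| \Phi^+_{0,1} \rangle \langle \Phi^+_{0,1} |$, and since the correlators in~(\ref{eqn:obl_thry_chsh_4}) are invariant (up to relabelling measurement settings) under the relevant corrections, Alice and Charlie can steer every conditional state to one achieving $\beta_{\vec b} = 4$. Then by~(\ref{eqn:I_CHSH_val}), $\beta = \sum_{\vec b} p_{\vec b}\, \beta_{\vec b} = 4 \sum_{\vec b} p_{\vec b} = 4$, giving the claim.

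The main obstacle I anticipate is not the fixed-point identity itself—that is guaranteed by~(\ref{eqn:stab_recall})—but rather verifying that the local corrections available to Alice and Charlie suffice to map \emph{every} outcome-conditioned Bell-family state to one yielding CHSH value $4$ using the \emph{same} restricted set of $xy$-plane correlators. The Remark warns that any $z$-involvement destroys the strong violation, so I must check that the label shifts produced by swapping never force a correction or measurement outside the $xy$-plane. Concretely, the hard part is tracking the index arithmetic of the $(\mu, m)$ labels through the chain and confirming that the induced Pauli/rotation corrections stay within the subgroup preserving the $xy$-plane observables; this is the bookkeeping that the deferred appendix calculations (Appendices~\ref{subsec:verify_ent_swap} and~\ref{subsec:verify_ost_stability}) are designed to discharge.
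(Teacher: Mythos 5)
Your proposal is correct and follows essentially the same route as the paper's proof: the same choice of shared states, Bell-family measurements, and outcome-dependent local Pauli corrections, with the closure of the family $\Phi$ under iterated swapping (deferred to the appendix) doing the real work, and the final value obtained by noting $\beta_{\vec b}=4$ for every outcome branch so that the weighted sum collapses to $4$. The ``hard part'' you flag --- tracking the $(\mu,m)$ labels and confirming the corrections stay Pauli (hence $xy$-plane-preserving) --- is exactly what the paper's Appendix~\ref{subsec:verify_ent_swap} computation $C\overline{B}A=(\pm i)\sigma_\xi R^{m^*}$ with $m^*$ odd discharges.
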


\begin{proof}
    Build the two-setting two-outcome measurement machines of Alice and Charlie out of the correlators
    \begin{equation}
        \begin{aligned}
            A_0 = rR\sigma_1R^\dag,
            \qquad 
            A_1 = rR\sigma_2R^\dag,
            \\ 
            C_0 = rR\sigma_1R^\dag, 
            \qquad
            C_1 = rR\sigma_2R^\dag.
        \end{aligned} 
    \end{equation}
    As we have noted in (\ref{eqn:obl_thry_chsh_4}), the above are valid correlators of the theory. For the four-outcome measurements, choose
    \begin{equation}
        \mathcal M_{B_i} 
        = 
        \big\{ 
        | \Phi^+_{\mu, 1} \rangle\langle \Phi^+_{\mu, 1} | 
        \ \big| \ 
        \mu \in \ZZ_4 
        \big\}, \ \forall \, i.
    \end{equation}
    From (\ref{eqn:Bell_basis}) it follows that $\mathcal M_{B_i}$ is a measurement. For the shared bipartite states, choose 
    \begin{equation}
        \rho_{AB_1} 
        = 
        \rho_{B_1B_2}
        =
        \cdots 
        =
        \rho_{B_n C}
        =
        | \Phi^+_{0, 1} \rangle\langle \Phi^+_{0, 1} |.
    \end{equation}

    Given these choices one can verify (refer to Appendix.~\ref{subsec:verify_ent_swap}) that the four output states obtained after each consecutive entanglement swap are the same.
    Each state is also equiprobable.
    The output state after $n$ rounds depends solely on the multiplicity of each of the four outcomes of $\mathcal M_{B_i}$, in the vector $\vec b = (b_1, \cdots, b_n) \in (\ZZ_4)^n$.
    Therefore to obtain the proper correction one can convert the outcomes to binary and perform a bit-wise XOR (equivalent to finding the resultant element of the Klein four-group).
    The output state and correction corresponding to each $\mu \in \ZZ_4$ is tabulated in Tab.~\ref{tab:output_and_corrections}.
    \begin{table}
        \centering
        \begin{tabular}{lcr}
            \hline 
            \hline 
            $\mu \in \ZZ_4$ \hspace*{5.75em}  & State \hspace*{5.75em} & Correction\\
            \hline 
            $0$
            \hspace*{5.75em}
            &
            $| \Phi^+_{0, 1} \rangle\langle \Phi^+_{0, 1} |$ 
            \hspace*{5.75em}
            &
            $\sigma_0 \otimes \Id$
            \\
            $1$
            \hspace*{5.75em}
            &
            $| \Phi^+_{1, -1} \rangle\langle \Phi^+_{1, -1} |$ 
            \hspace*{5.75em}
            &
            $\sigma_1 \otimes \Id$
            \\
            $2$
            \hspace*{5.75em}
            &
            $| \Phi^+_{2, -1} \rangle\langle \Phi^+_{2, -1} |$ 
            \hspace*{5.75em}
            &
            $\sigma_2 \otimes \Id$
            \\
            $3$
            \hspace*{5.75em}
            &
            $| \Phi^+_{3, 1} \rangle\langle \Phi^+_{3, 1} |$ 
            \hspace*{5.75em}
            &
            $\sigma_3 \otimes \Id$
            \\[1ex]
            \hline 
            \hline
        \end{tabular}
        \caption{The output state and correction corresponding to each outcome of Bob's measurement.}
        \label{tab:output_and_corrections}
    \end{table}

    The correction maps each output state back to $| \Phi^+_{0, 1} \rangle\langle \Phi^+_{0, 1} |$, which gives a CHSH value of $4$ with the above choice of correlators as was stated earlier.
    And therefore, we have 
    \begin{equation}
        \beta 
        = 
        \sum_{\vec b \in (\ZZ_4)^n} p_{b} \beta_{b} 
        = 
        \sum_{\vec b \in (\ZZ_4)^n} \big(\tfrac14\big)^n 4 
        = 
        4
    \end{equation}
\end{proof}


\section{Conclusion and Outlook} \label{sec:conclusion}

We have constructed a GPT in which a CHSH violation of $4$ can be sustained indefinitely under entanglement swapping.
As a consequence, the iterated CHSH game is insufficient to single out QM among GPTs.

In the process of obtaining this result, we have also set up a framework to turn bipartite theories into multipartite theories in which entanglement swapping is consistently defined. 

As an outlook to future work, Ref.~\cite{weilenmann2020toward} suggests that theories should satisfy stronger symmetry conditions, i.e.,
``...for any state and set of local measurements, if the local outcome probabilities are permuted, then there is a state that achieves these permuted correlations under the same measurements''.
If this is interpreted as an invarience under permutation of sub-systems, then oblate stabilizer theory satisfies this requirement.
If instead we interpret this as a symmetry under permutation of extremal effects, then our construction fails to satisfy this requirement. 
In particular, our construction breaks the symmetry between $z$-observables and those on the equatorial plane.
This raises two complementary questions for further work:
(1) Are there natural, stronger conditions on multipartite correlations for which QM is indeed optimal?
(2) Can one find a theory that beats QM in the iterated CHSH game and that is \emph{isotropic} in the sense of having a transitive symmetry group action on all extremal effects?


\section{Acknowledgements}

We thank 
Roger Colbeck,
Marc-Olivier Renou,
Mirjam Weilenmann, 
and
Elie Wolfe, 
for their discussions on composite GPTs, and Markus P.\ M\"{u}ller for his insight on generating examples of GPTs.
We also thank Ludovico Lami for his discussion on the history of GPTs.

This work has been supported by Germany's Excellence Strategy --
Cluster of Excellence Matter and Light for Quantum Computing (ML4Q) EXC 2004/1
- 390534769 and the German Research Council (DFG) via contract GR4334/2-2.


\bibliography{bibliography}


\onecolumngrid

\section{Appendix}

\noindent The Bell state can be written concisely as
\begin{equation}
    |\Phi^+ \rangle \langle \Phi^+ |
    =
    \tfrac14\sum_{\mu = 0}^3 (-1)^{\delta_{2\mu}} \sigma_\mu \otimes \sigma_\mu
\end{equation}

\subsection{Stability of OST}\label{subsec:verify_ost_stability}

To verify that oblate stabilizer theory is stable under entanglement swapping we show that 
\begin{equation}
    \swap
        {
            | \Phi^+_{0,1} \rangle \langle \Phi^+_{0,1} |
        }
        {
            | \Phi^+_{0,1} \rangle \langle \Phi^+_{0,1} |
        }
        {
            \bullet
        }
    :
    \sigma_\mu \otimes \sigma_\nu 
    \longmapsto 
    \tfrac14 \sigma_\mu \otimes \sigma_\nu
\end{equation}

To this end, note that 
\begin{equation}
    \begin{aligned}
        &\swap
        {
            | \Phi^+_{0,1} \rangle \langle \Phi^+_{0,1} |
        }
        {
            | \Phi^+_{0,1} \rangle \langle \Phi^+_{0,1} |
        }
        {
            \sigma_\mu \otimes \sigma_\nu
        }
        \\
        &=
        \tr_{23}\big(
        (R^\dag \otimes \Id)_{12}
        |\Phi^+ \rangle \langle \Phi^+ |_{12} 
        (R \otimes \Id)_{12}
        (R^\dag \otimes \Id)_{23}
        |\Phi^+ \rangle \langle \Phi^+ |_{23}
        (R \otimes \Id)_{23}
        (\sigma_\mu \otimes \sigma_\nu)_{34}
        \big)
        \\
        &=
        \tr_{23}\big(
        (\Id \otimes R^{-1})_{12}
        |\Phi^+ \rangle \langle \Phi^+ |_{12} 
        (R^{-1}R \otimes \Id^{\otimes 2})_{123}
        |\Phi^+ \rangle \langle \Phi^+ |_{23}
        (R \otimes \Id)_{23}
        (\sigma_\mu \otimes \sigma_\nu)_{34}
        \big)
        \\
        &=
        \tr_3\Big(
        \tfrac1{16}    
        \sum_{\alpha,\beta = 0}^3 (-1)^{\delta_{2\alpha} + \delta_{2 \beta}}
        \tr(R^{-1}\sigma_\alpha \sigma_\beta R) 
        (\sigma_\alpha \otimes \sigma_\beta)_{13}
        (\sigma_\mu \otimes \sigma_\nu)_{34}
        \Big)
        \\
        &=
        \tfrac18    
        \sum_{\alpha= 0}^3 
        \tr(\sigma_\alpha \sigma_\mu)
        (\sigma_\alpha \otimes \sigma_\nu)
        =
        \tfrac14 \sigma_\mu \otimes \sigma_\nu
    \end{aligned}
\end{equation}

Every $\rho \in \D^{(2)}$ is a linear combination of $\sigma_\mu \otimes \sigma_\nu$ for $\mu, \nu \in \ZZ_4$, hence the claim follows.

\subsection{Iterated Entanglement Swapping}\label{subsec:verify_ent_swap}

\noindent In order to find the result of $\swap{\Phi}{\Phi}{\Phi}$, let us first calculate the following identity: for some operators $A, B, C, D$
\begin{equation}\label{eqn:appendix_id_1}
    \begin{aligned}
        & \tr_{23}\big(
        (A \otimes \Id)_{12}
        |\Phi^+ \rangle \langle \Phi^+ |_{12} 
        (B \otimes \Id^{\otimes 2})_{123}
        |\Phi^+ \rangle \langle \Phi^+ |_{23}
        (\Id^{\otimes 2} \otimes C)_{234}
        |\Phi^+ \rangle \langle \Phi^+ |_{34}
        (\Id \otimes D)_{34}
        \big)
        \\
        &=
        \tfrac1{64}
        \sum_{\mu,\nu,\lambda = 0}^3 \Big( (-1)^{\delta_{2\mu} + \delta_{2\nu}+ \delta_{2\lambda}}
        \tr(\sigma_\mu \sigma_\nu) 
        \tr(\sigma_\nu \sigma_\lambda)
        A \sigma_\mu B \otimes C \sigma_\lambda D
        \Big)
        \\
        &=
        \tfrac1{16} (A \otimes C)\Big(\sum_{\mu = 0}^3 (-1)^{\delta_{2\mu}} \sigma_\mu \otimes \sigma_\mu \Big) (B \otimes D)
        \\
        &=
        \tfrac14 (A \otimes C) |\Phi^+ \rangle \langle \Phi^+ | (B \otimes D)
        =
        \tfrac14 (AC^t \otimes \Id) |\Phi^+ \rangle \langle \Phi^+ | (D^tB \otimes \Id).
    \end{aligned}
\end{equation}

\noindent Using this we can now compute 
\begin{equation}\label{eqn:appendix_id_2}
    \begin{aligned}
        & \tr_{23}\big(
        (A^\dag \otimes \Id)_{12}
        |\Phi^+ \rangle \langle \Phi^+ |_{12} 
        (A \otimes \Id)_{12}
        (B^\dag \otimes \Id)_{23}
        |\Phi^+ \rangle \langle \Phi^+ |_{23}
        (B \otimes \Id)_{23}
        (C^\dag \otimes \Id)_{34}
        |\Phi^+ \rangle \langle \Phi^+ |_{34}
        (C \otimes \Id)_{34}
        \big)
        \\
        &= \tr_{23}\big(
        (A^\dag \otimes \Id)_{12}
        |\Phi^+ \rangle \langle \Phi^+ |_{12} 
        (\overline{B}A \otimes \Id^{\otimes 2})_{123} 
        |\Phi^+ \rangle \langle \Phi^+ |_{23}
        (\Id^{\otimes 2} \otimes \overline{C}B)_{234}
        |\Phi^+ \rangle \langle \Phi^+ |_{34}
        (\Id \otimes C^t)_{34}
        \big)
        \\
        &=
        \tfrac14 (A^\dag (\overline{C}B)^t \otimes \Id) |\Phi^+ \rangle \langle \Phi^+ | ((C^t)^t\overline{B}A \otimes \Id)
        \\
        &=
        \tfrac14 (C \overline{B}A)^\dag \otimes \Id |\Phi^+ \rangle \langle \Phi^+ | (C \overline{B}A) \otimes \Id.
    \end{aligned}
\end{equation}

\noindent Therefore, the entanglement swapping map
\begin{equation}
    \begin{aligned}
        \swap
        {
            |\Phi^+_{\mu, m} \rangle \langle \Phi^+_{\mu, m} |
        }{
            |\Phi^+_{\nu, m'} \rangle \langle \Phi^+_{\nu, m'} |
        }{
            |\Phi^+_{\lambda, m''} \rangle \langle \Phi^+_{\lambda, m''} |
        }
        &
        =
        \tr_{23}\big(
        ((\sigma_\mu R^m)^\dag \otimes \Id)_{12}
        |\Phi^+ \rangle \langle \Phi^+ |_{12} 
        ((\sigma_\mu R^m) \otimes \Id)_{12}
        \\
        & \qquad \qquad
        ((\sigma_\nu R^{m'})^\dag \otimes \Id)_{23}
        |\Phi^+ \rangle \langle \Phi^+ |_{23}
        ((\sigma_\nu R^{m'}) \otimes \Id)_{23}
        \\
        & \qquad \qquad
        ((\sigma_\lambda R^{m''})^\dag \otimes \Id)_{34}
        |\Phi^+ \rangle \langle \Phi^+ |_{34}
        ((\sigma_\lambda R^{m''}) \otimes \Id)_{34}
        \big)
    \end{aligned}
\end{equation}
reduces to~(\ref{eqn:appendix_id_2}) with 
\begin{equation*}
    A = \sigma_\mu R^m, 
    \qquad
    B = \sigma_\nu R^{m'},
    \qquad
    C = \sigma_\lambda R^{m''}. 
\end{equation*}
which gives
\begin{equation}
    C\overline{B}A = \sigma_\lambda R^{m''} \overline{\sigma_\nu R^{m'}} \sigma_\mu R^{m} = (\pm i)\sigma_\xi R^{m^*},
\end{equation}

\noindent with $\xi \in \ZZ_4$ and $m^* \in \ZZ_8$ and odd.
Therefore, it follows that 
\begin{equation}
    \swap{\Phi}{\Phi}{\Phi}
    = \tfrac14 \Phi.
\end{equation}

\noindent Finally, we can specialize to the case of the optimal strategy for the iterated CHSH game. 
For $n=1$:
\begin{equation*}
    A = \sigma_0 R,
    \quad
    B = \sigma_\mu R,
    \quad
    C = \sigma_0 R
    \implies 
    C\overline{B}A = R \sigma_\mu
\end{equation*}

\noindent We can calculate the output of $n = 2$ by
entering the output of the first round into the second round, i.e., 
\begin{equation*}
    A = R \sigma_\mu,
    \quad
    B = \sigma_\nu R,
    \quad
    C = \sigma_0 R 
    \implies 
    C\overline{B}A 
    = \sigma_0 R \overline{\sigma_\nu R} R \sigma_\mu
    = R \sigma_\nu \sigma_\mu.
\end{equation*}

\noindent These are the same four states as $n=1$ up to scaling and factors of $\pm i$, which are eliminated since they come in complex conjugate pairs.
This means that the set of output states is closed.
Moreover, each time we have conjugation by an additional Pauli matrix.
The final result therefore depends only on the number of time each Pauli matrix occurs.
That is, it depends on the multiplicity of each member of $\ZZ_4$ in the outcome vector $\vec b = (b_1, \cdots, b_n)$.

\end{document}